\documentclass[a4paper,11pt,reqno,
final
]{amsart}	

\usepackage{amsmath}
\usepackage{amssymb}
\usepackage{amsthm}
\usepackage{amsfonts}
\usepackage{natbib}   




\newtheorem{theorem}{Theorem}[section]
\newtheorem{proposition}[theorem]{Proposition}

\newtheorem{assumption}[theorem]{Assumption}
\newtheorem{remark}[theorem]{Remark}
\newtheorem{definition}[theorem]{Definition}
\newtheorem{example}[theorem]{Example}

\def\E{{\mathbb E}}
\def\R{{\mathbb R}}

\newcommand{\ud}{\, {\rm d}}

\begin{document}

\title[Who would invest only in the risk-free asset?]{Who would invest only in the risk-free asset?}

\author[N. Azevedo]{$\textrm{N. Azevedo}^1$}
\address[N. Azevedo]{Financial Stability Department, Banco de Portugal \& ESEIG, Polytechnic Institute of Porto, Portugal}
\email{nazevedo@iseg.utl.pt}
\thanks{$^1$ The opinions expressed in this article are those of the authors and do not necessarily coincide with those of Banco de Portugal or the Eurosystem.}
\author[D. Pinheiro]{D. Pinheiro}
\address[D. Pinheiro]{Department of Mathematics, Brooklyn College of the City University of New York, NY, USA}
\email{dpinheiro@brooklyn.cuny.edu}
\author[S.Z. Xanthopoulos]{S.Z. Xanthopoulos}
\address[S.Z. Xanthopoulos]{Department of Mathematics, University of the Aegean, Samos, Greece}
\email{sxantho@aegean.gr}
\author[A.N. Yannacopoulos]{A.N. Yannacopoulos}
\address[A.N. Yannacopoulos]{Department of Statistics, Athens University of Economics and Business, Athens, Greece}
\email{ayannaco@aueb.gr}
\date{}

\begin{abstract}
Within the setup of continuous-time semimartingale financial markets, we show that a multiprior Gilboa-Schmeidler minimax expected utility maximizer forms a portfolio consisting only of the riskless asset if and only if among the investor's priors there exists a probability measure under which all admissible wealth processes are supermartingales. Furthermore, we show that under a certain attainability condition (which is always valid in finite or complete markets) this is also equivalent to the existence of an equivalent (local) martingale measure among the investor's priors. As an example, we generalize a no betting result due to Dow and Werlang.


\vspace{0.3cm}
\noindent
{\bf Keywords}: Martingale measures; portfolio optimization; robust utility.
\end{abstract}

\maketitle

\section{Introduction}

Expected utility maximization plays a prominent role in mathematical finance as a decision making tool. According to this paradigm, out of a family of wealth processes $(X_t(\pi))_{t\in[0,T]}$,  with regard to a market $S$,  and indexed in terms of portfolio processes $\pi$  in a certain set $\Pi$ of admissible portfolios, a decision maker (or investor) chooses the one corresponding to portfolio processes $\pi^{*}\in\Pi$ for which the following variational principle, known as \emph{portfolio optimization problem}, holds:
\begin{equation*}
{\mathbb E}_{P}[U(X_{T}(\pi^*))]=\sup_{\pi \in \Pi} {\mathbb E}_{P}[U(X_T(\pi))] \ .
\end{equation*}

Interestingly, the solvability of the portfolio optimization problem is closely related to qualitative and quantitative properties of the financial market described by $S$. In particular, it is well known that the well-posedness of the utility maximization problem is related to issues regarding the existence of an equivalent  (local) martingale measure (also known as risk neutral measure) and the corresponding availability of a linear pricing rule. This discussion has been initiated, for a variety of market models, in the seminal works of  \cite{Harrison1} and \cite{Harrison2,Harrison3}, where the concept of market viability is defined as the precise setting under which the above portfolio optimization problem has a solution in terms of a net trade compatible with certain constraints. It should also be remarked that the notion of market viability is closely related with absence of arbitrage, as well as with the existence of equivalent martingale measures, which can then be reinterpreted as appropriate pricing kernels as mentioned above. Indeed, the two notions are equivalent for finite markets. However, when considering continuous-time markets or models in infinite probability spaces, the situation becomes delicate on account of various mathematical intricacies.  While market viability remains robust as a concept, the notion of arbitrage, as well as this of equivalent martingale measure or pricing kernel, have to be refined. This contributed to the appearance of various alternative definitions, the connection among which was not always very clear, a fact that has led to the development of interesting literature by leading experts in the field.  In particular, various definitions have been introduced to express what should be meant by ``absence of arbitrage'', among which, just to mention a few, one should list those of \emph{no arbitrage} (NA), \emph{no unbounded profit with bounded risk} (NUPBR), \emph{no free lunch} (NFL), \emph{no free lunch with bounded risk} (NFLBR), and \emph{no free lunch with vanishing r
isk} (NFLVR). Links between these definitions have been thoroughly investigated. For instance, it is known that $NFL \Longrightarrow NFLBR \Longrightarrow NFLVR \Longrightarrow NUPBR$ and $NFLVR \Longleftrightarrow LUPBR + NA$. For a unified perspective of the most significant no arbitrage conditions, in the context of continuous semimartingale models, see \cite{fontana2015weak} and references therein. It shouldn't be surprising then that similar observations hold when discussing the existence of equivalent martingale measures. Indeed, each notion of absence of arbitrage is related to a generalized (often weaker) concept of equivalent martingale measure, each one of which having consequences on the behaviour of the portfolio optimization problem. For instance, as shown in \cite{karatzas2007numeraire} within a general semimartingale setting, the condition NUPBR is necessary for the solvability of the portfolio optimization problem, confirming and generalizing a similar result of \cite{lowenstein2000local}. Quite recently, \cite{choulli2015non} have shown that the condition NUPBR is equivalent to the solvability of the portfolio optimization problem (but possibly up to  an equivalent change of measure; see also \cite{christensen2007no} for a counter example). It is worth noting that these considerations have led to elegant formulations of the portfolio optimization problem in terms of duality methods, allowing for semi-explicit representations of the solutions (see e.g. \cite{kramkov1999asymptotic}, \cite{delbaen2006mathematics}).

However, the traditional Von Neumann-Morgenstern expected utility framework fails to address the problem of model uncertainty and ambiguity aversion that has been dictated by the famous Ellsberg paradox (\cite{ellsberg1961risk}), related with the distinction introduced in 1921 by Frank Knight between risk and uncertainty (\cite{knight1921uncertainty}).  According to this distinction, ``risk'' refers to the situation where the unique probability distribution of a random experiment is assumed to be known, while the term (Knightian) ``uncertainty'' is reserved for situations where such a unique probability assignment does not exist. It is precisely the fact that agents are not always capable of attaching a unique probability measure to the relevant state space that  was manifested by the Ellsberg paradox, reflecting the fact that when the information available is not ``sufficient'' to form a single probability distribution assumption, a decision maker may consider a whole set of alternative distributions as plausible models and then act on this consideration.   The model uncertainty and ambiguity aversion paradigm has been employed to provide some explanation of various empirical observations that did not comply to the more traditional expected utility model, such as for example the failure of the two-fund separation theorem, the equity premium and the risk-free rate puzzles, the trading freezes, etc..

There are two basic strands in the literature that extend the expected utility paradigm, to cope with model uncertainty. The first, introduced by \cite{schmeidler1989subjective}, is based on the use of non-additive  probabilities (capacities) to represent the decision maker's beliefs, while the second, introduced by \cite{gilboa1989maxmin}, allows for beliefs to be represented by a set of probabilities, while preferences are expressed by  the  ``maxmin''  on  the  set  of  expected  utilities. Under the Gilboa-Schmeidler approach, the traditional utility is replaced by the robust utility 
\begin{equation*}
{\mathcal U}(X):=\inf_{P\in{\mathcal P}}\E_P[U(X)] \ , 
\end{equation*} 
where ${\mathcal P}$ is a relevant set of priors concerning the distribution of  $X$. Now, the aim of the investor is to maximize this robust utility over all admissible trading strategies, leading to an optimization problem of the form 
\begin{equation*}
\sup_X\inf_{P\in{\mathcal P}}\E_P[U(X_T)] \ ,
\end{equation*}
where $X$ takes values in the set of all wealth processes associated with admissible portfolios. Knightian decision theory plays nowadays a prominent role in economic theory  (see e.g. \cite{bewley2002knightian} for a rigorous formulation of Knight's ideas), and finance in particular, as ambiguity introduces interesting effects in the portfolio optimization problem. For example, in the context of a simple one period model with a single asset,  \cite{dow1992uncertainty} have shown that, although within the expected utility paradigm trades occur generically, ambiguity may generate no betting intervals. This no betting effect is further addressed and reconfirmed  in later studies, see e.g. \cite{billot2000sharing}, \cite{easley2009ambiguity} and \cite{guidolin2010simple}. For a detailed recent review of the literature see \cite{guidolin2013ambiguity} and references therein.

Furthermore, there has been some recent interesting academic activity with regard to the characterization of the non arbitrage condition under model uncertainty (e.g.  \cite{bayraktar2015arbitrage}, \cite{bouchard2015arbitrage}, \cite{biagini2015robust}). For example, in \cite{bayraktar2015arbitrage} it is shown in a discrete time non-dominated model uncertainty setting that NA holds if and only if there exists a family of probability measures such that any admissible value process is a local supermartingale under these measures.

The goal of this work is to  extend the discussion concerning the connection between the existence of martingale and supermartingale measures and the portfolio optimization problem in the framework of minimax utility. In particular, we show that for a general class of minimax utilities of the Gilboa-Schmeidler type, the optimal portfolio consists purely of investment in the riskless asset if and only if among the investor's priors there exists a probability measure under which all admissible wealth processes are supermartingales. In addition, we show that under a certain attainability condition (which is always valid in finite or complete markets) this is also equivalent to the existence of an equivalent (local) martingale measure among the investor's priors.  Furthermore, we show in a simple example a potentially interesting  link of our results to the  no betting or ``market freezes'' phenomenon as described in \cite{dow1992uncertainty}.

This paper is organized as follows. In Section \ref{Sec2} we describe the setting we work with. Section \ref{Sec3} is devoted to the analysis of the particular, yet very relevant, case of Von Neuman-Morgernstern utilities and in Section \ref{Sec4} we state and prove our main results.

\section{Setup and problem formulation}\label{Sec2}

We consider a securities market consisting of $d+1$ assets, one riskless and $d$ risky assets. The riskless asset is assumed to bear a deterministic instantaneous risk free interest rate.  Then, without loss of generality and to ease notation, we may assume that the rate of return of the riskless asset is $r=0$. Otherwise, we may simply use the (non-zero) price of the riskless asset as num\'eraire. Let $T>0$ be a fixed finite horizon. The risky assets price process will be denoted by $S=(S_t)_{t\in[0,T]}$ with $S_t=\left(S^{1}_t,\ldots, S^{d}_t\right)$, $t \in [0,T]$, where the coordinate processes $S^{i}$, $i=1,2,\ldots,d$, are assumed to be semimartingales on a filtered probability space $(\Omega, {\mathcal F}, ({\mathcal F}_{t})_{t\in[0,T]},P)$, for any  probability measure $P$ under consideration.

A portfolio $\pi\in\Pi$ is a pair $(x,H)$ where the constant $x$ is the initial wealth and $H=(H_t)_{t\in[0,T]}$ with $H_t=\left(H^{1}_t,\ldots, H^{d}_t\right)$, $t \in [0,T]$, is a predictable process specifying the amount of risky assets held in the portfolio. The wealth of an investor with a self financing portfolio $\pi=(x,H)$ is given by the stochastic process $X=(X_t)_{t\in[0,T]}$ defined as
\begin{eqnarray*}
X_t=x + \int_{0}^{t} H_u \ud S_u
\end{eqnarray*}
for every $t\in[0,T]$.

On the filtered measurable space  $(\Omega, {\mathcal F}, ({\mathcal F}_{t})_{t\in[0,T]})$  we assign  a  set of priors ${\mathcal P}$ consisting of the subjective probability laws that may govern the market according to the beliefs of the investor.  Decisions concerning optimal portfolio choice are made using a minimax utility framework, to accommodate the effects of robustness and model uncertainty. Hence, each investor solves a minimax problem of the form
\begin{eqnarray}\label{MIN-MAX-A-A}
\sup_{X \in {\mathcal X}(x)} \inf_{P \in {\mathcal P}} {\mathbb E}_{P}[U(X_T)].
\end{eqnarray}
where $U$ is the utility function of the investor and  ${\mathcal X}(x)$ is the set of all admissible wealth processes $X$  with initial wealth $x$, defined as
\begin{eqnarray*}
{\mathcal X}(x)=\{ X \ge 0 \,\, : \,\, X_t=x + \int_{0}^{t} H_u \ud S_u\;\; \text{for $t\in[0,T]$ }\} \ .
\end{eqnarray*}

Before advancing, we  need to define the set of equivalent supermartingale measures and the set of equivalent local martingale measures

\begin{definition}\label{noarb_hyp}
Let $P\in {\mathcal P}$. We define  
$${\mathcal S}_P= \{Q \sim P \,\, :\,\,  X \mbox{ is $Q$-supermartingale  for all} \,\, X\in{\mathcal X}(x) \}$$ the set of supermartingale measures on  $(\Omega, {\mathcal F}, ({\mathcal F}_{t})_{t\in[0,T]})$ that are equivalent to $P$. We also set ${\mathcal S_P}=\cup_{P\in {\mathcal P}}{\mathcal S}_P$. 
\end{definition}
\begin{remark}
In the statements and arguments that will follow one can easily check that the above defined sets ${\mathcal S}_P$ and ${\mathcal S}_{\mathcal{P}}$ can as well be replaced by the sets ${\mathcal S}^T_P= \{Q \sim P \,\, :\,\,  \E_Q(X_T)\leq x \,\, \mbox{for all} \,\, X\in{\mathcal X}(x) \}$ and $ {\mathcal S}^T_{\mathcal P}=\cup_{P\in {\mathcal P}}{\mathcal S}^T_P$, respectively.
\end{remark}

\begin{definition}
Let $P\in {\mathcal P}$. We define  
$${\mathcal M}_P=\{Q \sim P \,\, :\,\,  Q \, \mbox{ is a local martingale measure}\}$$
 the set of local martingale measures  on  $(\Omega, {\mathcal F}, ({\mathcal F}_{t})_{t\in[0,T]})$ that are equivalent to $P$. We also set ${\mathcal M_P}=\cup_{P\in {\mathcal P}}{\mathcal M}_P$.
\end{definition}

It is clear that ${\mathcal M}_P\subset {\mathcal S}_P$ since all admissible processes are uniformly bounded from below. In what concerns the utility function in the minimax optimization problem \eqref{MIN-MAX-A-A}, we impose that:

\begin{assumption} \label{U_hyp}
$U:\R_{+} \to \R$ is a strictly concave, continuously differentiable, strictly increasing function satisfying the Inada conditions $\lim_{x \to 0^{+}} U'(x)=\infty,\,\, \lim_{x \to \infty}U'(x)=0$ and  the asymptotic elasticity  inequality
\begin{equation*}
AE(U):=\limsup_{x \to \infty} \frac{x U'(x)}{U(x)}<1 \ .
\end{equation*}
\end{assumption}

Furthermore, one needs to impose sufficient hypotheses  on the set of priors ${\mathcal P}$  such that a saddle point for problem \eqref{MIN-MAX-A-A} exists. 

The fundamental condition is:

\begin{assumption}\label{P_hyp}
The set of priors ${\mathcal P}$  is convex and weakly compact.
\end{assumption}

This fundamental assumption may have  to be complemented with further technical conditions which are dependent on the specific choice of model. For example, in a non-dominated continuous semimartingale setting one can adopt the additional  \cite{denis2013optimal}
condition concerning the existence of  a  non empty set ${\mathcal P}_{0}$ of  orthogonal martingale laws satisfying H\"older continuity conditions, such that (i)  for any $P \in {\mathcal P}$,   there exists $P_{0} \in {\mathcal P}_{0}$ with  ${\mathbb E}_{P_{0}}\left[ \left(\ud P/\ud P_{0}\right)^2\right] \le C$ for some constant $C$, and (ii) for any $P_{0} \in {\mathcal P}_{0}$, there exists a $P \in {\mathcal P}$ such that $P \sim P_{0}$ (see Hypothesis (H)  in \cite{denis2013optimal}). 
In the context of  L\'evy models and the particular case of log or power utilities one can adopt the additional  \cite{neufeld2015robust} condition that the uncertainty about drift, volatility and jumps is parametrized by a non empty set $\Theta \subset \R^{d} \times {\mathbb S}_{+}^{d}\times {\mathcal L}$ where ${\mathcal L}$ is the set of L\'evy measures on $\R^{d}$, which is convex and satisfies specific boundedness conditions (see Assumption 2.1 in  \cite{neufeld2015robust}).

Finally, we impose a non arbitrage assumption

\begin{assumption}\label{P_hyp_2}
For each $P\in{\mathcal P}$ we assume that ${\mathcal M}_P\ne \emptyset$. 
\end{assumption}

It should also be made clear that we do not require  the members of the set $\mathcal P$ to be mutually equivalent or even dominated.

\vspace{3mm} 

We will show that, under the assumptions listed above, an investor with preferences described by a Gilboa-Schmeidler minimax utility will place all of his wealth on the riskless asset if and only if the set of priors ${\mathcal P}$ contains a supermartingale measure (i.e. ${\mathcal P}\cap {\mathcal S}\ne\emptyset$). Furthermore, we will show that, under a certain attainability condition (which holds when $\Omega$ is finite or the market is complete), the investor will invest only in the risk free asset if an only if the set of priors $\mathcal P$ contains an equivalent  martingale measure (i.e. ${\mathcal P} \cap \mathcal M \ne \emptyset$). As a first step towards this goal, we will show that the same conclusion holds whenever ${\mathcal P}=\{P\}$ is a singleton, corresponding to the case of Von Neumann-Morgernstern utilities.

\section{An auxiliary result: the case of Von Neumann-Morgernstern utilities}\label{Sec3}

In the case of Von Neumann-Morgernstern utilities, the minimax problem \eqref{MIN-MAX-A-A} reduces to the optimization problem
\begin{eqnarray}\label{u_def}
u(x):=\sup_{X \in {\mathcal X}(x)}  {\mathbb E}_{P}[U(X_T)] \ .
\end{eqnarray}
This problem has been studied in the seminal work of \cite{kramkov1999asymptotic} using duality techniques. As in \cite{kramkov1999asymptotic}, we also assume that  $u(x)<\infty$ for some $x>0$. They have solved \eqref{u_def} in termos of the dual problem 
\begin{eqnarray}\label{v_def}
v(y)=\inf_{Y \in {\mathcal Y}(y)}{\mathbb E}_{P}[V(Y_T)] \ , 
\end{eqnarray}
where ${\mathcal Y}(y)$ is defined as
\begin{eqnarray*}
{\mathcal Y}(y)=\{ Y=(Y_{t})_{t \in [0,T]}  \ge 0 \,\, : \,\, Y_0=y, \,\, X Y=(X_t Y_t)_{t\in[0,T]} \,\,\\
~~~~~~~~~~~~~~~~ \text{is a supermartingale for all $X \in {\mathcal X}(1)$}\} \ ,
\end{eqnarray*}
and $V$ is the Fenchel-Legendre conjugate of the utility function $U$, defined by 
$$
V(y)=\sup_{x >0} \{U(x)-y x\}  \ .
$$

Under the assumptions stated in Section \ref{Sec2}, and in particular under the assumption that  $AE(U)<1$, both the primal and the dual problem admit unique solutions, $\hat{X}:=(\hat{X}_{t})_{t\in[0,T]}\in{\mathcal X}(x)$ and $\hat{Y}:=(\hat{Y}_{t})_{t\in[0,T]}\in{\mathcal Y}(y)$ respectively, related through the identity
\begin{eqnarray}\label{26-7-2016}
\hat{X}_{T}=I(\hat{Y}_{T}) \,\,\, \mbox{for} \,\,\, y=u'(x),  \,\,\, I=(U')^{-1},
\end{eqnarray} whereas   \cite[Thm 2.2]{kramkov1999asymptotic}
\begin{eqnarray}\label{26-7-2016-A}
\hat{X}\hat{Y}=\{ \hat{X}_{t} \hat{Y}_t, \,\,\, t \in [0,T]\}, \,\,\,  \mbox{uniformly integrable martingale}.
\end{eqnarray}
Furthermore, it holds that
\begin{eqnarray}\label{A-123-A}
v(y)=\inf_{Q \in {\mathcal M}}{\mathbb E}_{P} \left[  V\left( y \frac{\ud Q}{\ud P} \right)\right],
\end{eqnarray}
where $\ud Q/\ud P$ denotes the Radon-Nikodym derivative of $Q$ with respect to $P$ on $(\Omega, {\mathcal F}_{T})$, and ${\mathcal M}={\mathcal M}_{P}$ is the set of equivalent local martingale measures for the securities market under consideration.  

We should stress here that the infimum in \eqref{A-123-A} may or may  not be attained. For example, the infimum is always attained if $\Omega$ is a finite probability space (see e.g. \cite[Ch. 3]{delbaen2006mathematics}), or if the market is complete. On the other hand, if $\Omega$ is not finite and the market is incomplete, one can find counter examples as detailed in \cite{kramkov1999asymptotic}.

\begin{proposition}\label{PROP-BASIC}  Suppose  that assumptions \ref{U_hyp}, \ref{P_hyp_2} hold and that ${\mathcal P}=\{P\}$ is a singleton. 
 Then:
\begin{itemize}
\item[(i)] The optimal portfolio is non-random if and only if  $P \in {\mathcal S}$ .
\item[(ii)] If moreover the infimum in \eqref{A-123-A} is attained, then the optimal portfolio is non-random if and only if $P \in {\mathcal M}$ .
\end{itemize}
\end{proposition}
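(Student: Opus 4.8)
The plan is to exploit the convex-duality description of the Von Neumann--Morgenstern problem recalled above, using in particular that Assumptions \ref{U_hyp} and \ref{P_hyp_2} guarantee existence and uniqueness of the primal optimizer $\hat X$ and dual optimizer $\hat Y$, tied together by \eqref{26-7-2016} and \eqref{26-7-2016-A}. Throughout I read ``the optimal portfolio is non-random'' as ``$\hat X_T$ is (a.s.) deterministic'', and I first note this forces $\hat X_T = x$: since $\mathcal M_P \ne \emptyset$ by Assumption \ref{P_hyp_2}, pick $Q \in \mathcal M_P$; every $X \in \mathcal X(x)$ is a $Q$-local martingale bounded below, hence a $Q$-supermartingale, so $\E_Q[\hat X_T] \le x$, which for constant $\hat X_T$ gives $\hat X_T \le x$; on the other hand the constant process $X_t \equiv x$ is admissible with value $U(x)$, so optimality forces $U(\hat X_T) \ge U(x)$ and hence $\hat X_T \ge x$. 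Thus $\hat X_T = x$, which is exactly pure investment in the riskless asset. For the ``if'' part of (i), suppose $P \in \mathcal S$. Then every $X \in \mathcal X(x)$ is a $P$-supermartingale, so $\E_P[X_T] \le x$, and by Jensen's inequality together with monotonicity of $U$,
\begin{equation*}
\E_P[U(X_T)] \le U(\E_P[X_T]) \le U(x),
\end{equation*}
while the constant process attains $U(x)$; hence $X_t \equiv x$ is optimal, and by uniqueness of the optimizer the optimal portfolio is non-random.

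For the ``only if'' part of (i), suppose $\hat X_T = x$. By \eqref{26-7-2016} we have $\hat X_T = I(\hat Y_T)$ with $I=(U')^{-1}$, so $\hat Y_T = U'(x)$ is deterministic. Since the constant process $1 \in \mathcal X(1)$, the definition of $\mathcal Y(y)$ shows that $\hat Y = 1\cdot\hat Y$ is itself a $P$-supermartingale with $\hat Y_0 = y$. Moreover \eqref{26-7-2016-A} gives $\E_P[\hat X_T\hat Y_T] = \hat X_0\hat Y_0 = xy$, whence $\E_P[\hat Y_T]=y$ and, $\hat Y_T$ being deterministic, $\hat Y_T = y$. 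A $P$-supermartingale with $\hat Y_0 = \E_P[\hat Y_T]$ has constant expectation and is therefore a $P$-martingale, so $\hat Y_t = \E_P[\hat Y_T \mid \mathcal F_t] = y$ for every $t$, i.e. $\hat Y \equiv y$. Finally, for any $X \in \mathcal X(1)$ the process $X\hat Y = yX$ is a supermartingale by the definition of $\mathcal Y(y)$, so $X$ is a $P$-supermartingale; scaling by $x$ yields the same for all $X \in \mathcal X(x)$, and hence $P \in \mathcal S$.

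Part (ii) follows quickly. Because admissible processes are bounded below one has $\mathcal M \subset \mathcal S$, so if $P \in \mathcal M$ then $P \in \mathcal S$ and part (i) already gives a non-random optimizer (this direction needs no attainability). Conversely, if the optimizer is non-random then the previous paragraph produces $\hat Y \equiv y$; when the infimum in \eqref{A-123-A} is attained at some $Q^{\ast} \in \mathcal M$, the dual optimizer is the associated density process, $\hat Y_t = y\,\E_P[\,\ud Q^{\ast}/\ud P \mid \mathcal F_t\,]$, so in particular $\hat Y_T = y\,\ud Q^{\ast}/\ud P$. Comparing with $\hat Y_T = y$ forces $\ud Q^{\ast}/\ud P = 1$ a.s., i.e. $Q^{\ast} = P$, and therefore $P \in \mathcal M$.

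The main obstacle is the ``only if'' direction of (i): one must extract from the single scalar fact ``$\hat X_T = x$'' the full pathwise supermartingale property of \emph{every} admissible wealth process. The leverage comes entirely from the duality relations \eqref{26-7-2016}--\eqref{26-7-2016-A}, which convert the deterministic terminal value into the identification $\hat Y \equiv y$; the one point requiring care is the elementary but essential observation that a $P$-supermartingale whose terminal expectation equals its initial value is a genuine martingale, hence constant here since $\hat Y_T$ is constant. Part (ii) then needs only the mild extra input that attainability of the dual infimum upgrades this constant dual optimizer from a supermartingale density to an honest equivalent local martingale measure, which is forced to coincide with $P$.
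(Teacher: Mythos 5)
Your proof is correct and, for part (i), follows essentially the paper's duality route: both arguments reduce everything to showing $\hat{Y}\equiv y$ and then read off the supermartingale property of every $X\in{\mathcal X}(1)$ from $\hat{Y}\in{\mathcal Y}(y)$. (Your detour through ``a supermartingale with constant expectation is a martingale'' replaces the paper's direct computation $\hat{Y}_t={\mathbb E}_P[\hat{Y}_T\mid{\mathcal F}_t]$ obtained by dividing the uniformly integrable martingale $\hat{X}\hat{Y}$ of \eqref{26-7-2016-A} by the constant $x$; this is a cosmetic difference.) Where you genuinely diverge is the converse of part (ii). The paper stays at the level of values: from $\hat{Y}\equiv y$ it deduces $v(y)=V(y)$, and combining this with the attained infimum in \eqref{A-123-A} and ${\mathbb E}_P[\ud Q^*/\ud P]=1$ it obtains a Jensen \emph{equality} for the strictly convex conjugate $V$, which forces $\ud Q^*/\ud P$ to be a.s.\ constant. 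You instead identify the attained minimizer with the dual optimizer itself, writing $\hat{Y}_T=y\,\ud Q^*/\ud P$ and comparing with $\hat{Y}_T=y$. That identification is valid but rests on the uniqueness of the dual solution: the density process $y\,{\mathbb E}_P[\ud Q^*/\ud P\mid{\mathcal F}_t]$ lies in ${\mathcal Y}(y)$ and attains $v(y)$, hence must coincide with $\hat{Y}$ \emph{only because} the dual optimizer is unique (as asserted via \cite{kramkov1999asymptotic}); you should invoke this explicitly rather than assert the identification. Granting uniqueness, your route is shorter; the paper's Jensen argument avoids needing it. A small bonus of your write-up is the preliminary observation that a deterministic optimal terminal wealth must equal $x$, which the paper leaves implicit.
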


\begin{proof}

(i) If  $P \in {\mathcal S}$, then clearly $\E_P(X_T)\leq x$ for any admissible portfolio. Jensen's inequality and the fact that $U$ is increasing readily imply that ${\mathbb E}_{P}[U(X_T)] \leq U({\mathbb E}_P(X_T))\leq  U(x)$ and thus the non random portfolio of constant value $x$ is the maximizer.

For the converse assume that $\hat{X}=x$  is the maximizer. From \eqref{26-7-2016-A} we have that $\hat{X}_t \hat{Y}_t={\mathbb E}_{P}[\hat{X}_T\hat{Y}_T \mid {\mathcal F}_{t}]$ for any $t \in [0,T]$ which implies that $\hat{Y}_t={\mathbb E}_{P}[\hat{Y}_{T} \mid {\mathcal F}_{t}]$ since $\hat{X}=x$ by assumption. This in turn implies that  $\hat{Y}_{t}=U'(x)$ for every $t \in [0,T]$, since by \eqref{26-7-2016} $\hat{Y}_{T}=U'(x)$ which is also non random. Being optimal, $\hat{Y}=U'(x) \in {\mathcal Y}(y)$ and thus the definition of ${\mathcal Y}(y)$ implies that $XU'(x)$ is supermartingale for all $X\in {\mathcal X}(1)$. Therefore, since $U'(x)>0$, we conclude that any arbitrary admissible process  $X\in {\mathcal X}(x)$ is a $P$-supermartingale, which means that $P\in{\mathcal S}$.

(ii) If $P \in {\mathcal M}$ then $P\in {\mathcal S}$, since the admissible processes $X$ are bounded from below and the proof follows as in case (i) above.

For the converse, suppose that the non random portfolio is the maximizer, i.e.  $\hat{X}=x$.  Then,  the solution of the dual problem is given by $\hat{Y}=y$ (indeed,  we can argue again as in part (i) of the proof to show that if $\hat{X}=x$ then $\hat{Y}$ is constant and $\hat{Y} \in {\mathcal Y}(y)$). Then, \eqref{v_def} implies that 
\begin{eqnarray}\label{C-123-C}
v(y)={\mathbb E}_P(V(\hat{Y}_T))={\mathbb E}_P(V(y))=V(y) \ .
\end{eqnarray}

On the other hand, under the assumption  that the infimum in \eqref{A-123-A} is attained, there exists a measure $Q^{*} \in {\mathcal M}$ such that
\begin{eqnarray}\label{B-123-B}
v(y)={\mathbb E}_{P} \left[  V\left( y \frac{\ud Q^{*}}{\ud P} \right)\right].
\end{eqnarray}
However, since ${\mathbb E}_{P}[\ud Q^{*}/\ud P]=1$, we also have that
\begin{eqnarray}\label{D-123-D}
V(y)= V \left( {\mathbb E}_{P} \left[ y \frac{\ud Q^{*}}{\ud P} \right]  \right)\ .
\end{eqnarray}

Combining \eqref{C-123-C}, \eqref{B-123-B}  and \eqref{D-123-D} we deduce that we have a Jensen's equality with regard to the strictly convex function $V$
\begin{eqnarray*}
{\mathbb E}_{P} \left[  V\left( y \frac{\ud Q^{*}}{\ud P} \right)\right] = V \left( {\mathbb E}_{P} \left[ y \frac{\ud Q^{*}}{\ud P} \right]  \right) ,
\end{eqnarray*}
and, therefore, the random variable $ \frac{\ud Q^{*}}{\ud P} $ is a.e. constant with regard to $P$ and thus $P=Q^*\in{\mathcal M}$.
\end{proof}

\begin{remark}
As one of the referees pointed out, the result of Proposition  \ref{PROP-BASIC} can be rephrased as: The non random portfolio is optimal if and only if $1 \in {\mathcal Y}(1)$, from which one may show   Proposition  \ref{PROP-BASIC}  using essentially the same arguments.
\end{remark}

\begin{remark}
In market models where the infimum in \eqref{A-123-A} is attained for any probability measure $P$ such that ${\mathcal M}_P\ne\emptyset$, one is also tempted to think along the following lines: If $Q\in {\mathcal S}_P$ is a supermartingale measure, then any investor who would believe that the market is governed by $Q$, would choose the non random portfolio, according to Proposition \ref{PROP-BASIC} (i). But then Proposition \ref{PROP-BASIC} (ii) implies that $Q\in {\mathcal M}_P$. Therefore, one may draw the conclusion that ${\mathcal S}_P\subset{\mathcal M}_P$  and since the inverse inclusion holds as well we should have that ${\mathcal S}_P={\mathcal M}_P$. Taking this thought one step further, one wonders with regard to the infimum in \eqref{A-123-A}  whether it is true 
in general true that $\inf_{Q \in {\mathcal M}_P}{\mathbb E}_{P} \left[  V\left( y \frac{\ud Q}{\ud P} \right)\right] = \inf_{Q \in {\mathcal S}_P}{\mathbb E}_{P} \left[  V\left( y \frac{\ud Q}{\ud P} \right)\right]$.
\end{remark}

\begin{remark}
As we have already stated, Proposition \ref{PROP-BASIC} holds in particular for the  case of models in finite probability spaces, and for both single period and multiple periods models. In such cases the  infimum in \eqref{A-123-A} is always attained (see \cite[Theorem 3.2.1]{delbaen2006mathematics}). It should be clear that in such a  setting a simpler proof of our result may be obtained by treating directly the portfolio optimization problem and the resulting variational inequality.  
\end{remark}

\begin{example}
For CARA utilities of the form 
\begin{equation*}
U(x)=\frac{1}{\alpha} x^{\alpha} \ , \qquad \alpha <1 \ ,
\end{equation*}
one can easily compute the Fenchel-Legendre conjugate to obtain
\begin{equation*}
V(y)= -\frac{1}{\nu} y^{\nu} \ ,  \text{with $\displaystyle \nu=\frac{\alpha}{\alpha -1}$ .}
\end{equation*}
In this case, assuming that the infimum is attained for a measure $Q^{*} \in {\mathcal M}$ such that $\ud Q^{*}/\ud P=\phi > 0$ and, additionally, that $P$ is absolutely continuous with respect to Lebesgue measure with density $f \ge 0$, then identity \eqref{C-123-C} assumes the form
\begin{eqnarray*}
1=\int_{0}^{\infty}  \phi(s)^{\nu} f(s) \ud s \ ,
\end{eqnarray*}
where it also holds that
\begin{eqnarray*}
\int_{0}^{\infty} f(s) \ud s = \int_{0}^{\infty} \phi(s) f(s) \ud s=1 \ .
\end{eqnarray*}
As a consequence of H\"older's inequality, the only function $\phi$ with this property is the constant function. Indeed, we can write
\begin{eqnarray*}
1= \int_{0}^{\infty} f(s) \ud s &=& \int_{0}^{\infty} f^{\frac{\nu}{\nu-1}} \phi^{\frac{\nu}{\nu-1}} f^{-\frac{1}{\nu-1}} \phi^{-\frac{\nu}{\nu-1}} \ud s \\
&\le& \left\{ \int_{0}^{\infty} (  f^{\frac{\nu}{\nu-1}} \phi^{\frac{\nu}{\nu-1}} )^{\frac{\nu-1}{\nu}} \ud s \right\}^{\frac{\nu}{\nu-1}} \,
 \left\{ \int_{0}^{\infty} (  f^{\frac{1}{1-\nu}} \phi^{\frac{\nu}{1-\nu}} )^{1-\nu}  \ud s\right\}^{\frac{1}{1-\nu}} \\
&=& \left\{ \int_{0}^{\infty}  f(s) \phi(s) \ud s \right\}^{\frac{\nu}{\nu-1}} \,
 \left\{ \int_{0}^{\infty}  f(s) \phi(s)^{\nu} \ud s\right\}^{\frac{1}{1-\nu}} =1.
\end{eqnarray*}
Since the equality is achieved in H\"older's inequality, there exists a constant $C$ such that 
\begin{equation*}
(f^{\frac{\nu}{\nu-1}} \phi^{\frac{\nu}{\nu-1}})^{\frac{\nu-1}{\nu}}= C (  f^{\frac{1}{1-\nu}} \phi^{\frac{\nu}{1-\nu}} )^{1-\nu}
\end{equation*}
from which it follows that $\phi$ is a constant function and, thus, equal to $1$. Therefore, we obtain that $P=Q^{*}$, as required.
\end{example}

\section{Gilboa-Schmeidler minimax utilities}\label{Sec4}

We now proceed to treat the general case where ${\mathcal P}$ is not a singleton.

\begin{theorem}\label{MAIN-THEOREM-CONTINUOUS}
Suppose that assumptions \ref{U_hyp}, \ref{P_hyp} and \ref{P_hyp_2} hold for the utility function $U$ and the set of priors ${\mathcal P}$.
Consider agents reporting minimax utility of the form 
\begin{equation*}
{\mathcal U}(X)=\inf_{P \in {\mathcal P}} {\mathbb E}_{P}[U(X)]
\end{equation*}
for  random wealth $X$.  The solution to the robust final wealth optimization problem 
\begin{equation*}
\sup_{X \in {\mathcal X}(x)} \inf_{P \in {\mathcal P}} {\mathbb E}_{P}[U(X_T)]
\end{equation*}
is a non-random portfolio if and only if  ${\mathcal P} \cap {\mathcal S} \ne \emptyset$. Furthermore, if the infimum in \eqref{A-123-A} is attainable for every $P\in{\mathcal P}$ then the solution to the optimization problem is the non random portfolio if and only if  ${\mathcal P} \cap {\mathcal M} \ne \emptyset$. 
\end{theorem}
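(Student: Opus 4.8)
The plan is to exploit the convexity and weak compactness of $\mathcal{P}$ to produce a saddle point for the minimax functional, and thereby reduce the robust problem to a single-prior problem to which Proposition \ref{PROP-BASIC} applies. Throughout, write $F(X,P):=\E_P[U(X_T)]$ on $\mathcal{X}(x)\times\mathcal{P}$.

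I would dispose of the easy implications first. Suppose $\mathcal{P}\cap\mathcal{S}\neq\emptyset$ and pick $Q\in\mathcal{P}$ under which every admissible wealth process is a supermartingale, so that $\E_Q[X_T]\le x$ for all $X\in\mathcal{X}(x)$. Since $Q$ is one of the competing priors, Jensen's inequality and monotonicity of $U$ give
\[
\inf_{P\in\mathcal{P}}\E_P[U(X_T)]\le \E_Q[U(X_T)]\le U(\E_Q[X_T])\le U(x)
\]
for every admissible $X$, while the constant portfolio $X\equiv x$ attains the value $U(x)$; hence the non-random portfolio is optimal. The martingale version of this implication is immediate from $\mathcal{M}\subseteq\mathcal{S}$.

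For the converse the first and central step is to establish a saddle point $(\hat X,\hat P)\in\mathcal{X}(x)\times\mathcal{P}$ of $F$. The map $X\mapsto F(X,P)$ is concave (by concavity of $U$ and convexity of $\mathcal{X}(x)$), whereas $P\mapsto F(X,P)$ is affine and weakly lower semicontinuous, and $\mathcal{P}$ is convex and weakly compact by Assumption \ref{P_hyp}; an application of Sion's minimax theorem then yields
\[
\sup_{X\in\mathcal{X}(x)}\inf_{P\in\mathcal{P}}F(X,P)=\inf_{P\in\mathcal{P}}\sup_{X\in\mathcal{X}(x)}F(X,P)=:V .
\]
Weak compactness together with lower semicontinuity guarantees that the outer infimum on the right is attained at some $\hat P\in\mathcal{P}$, and the existence of a robust maximizer $\hat X$ attaining the left-hand supremum is supplied by the standing assumptions (the asymptotic elasticity bound $AE(U)<1$ together with the model-dependent conditions recalled after Assumption \ref{P_hyp}). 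A standard argument shows that such a pair is a genuine saddle point; in particular $F(X,\hat P)\le F(\hat X,\hat P)$ for all $X\in\mathcal{X}(x)$, i.e.\ $\hat X$ is precisely the optimizer of the single-prior (Von Neumann-Morgenstern) problem with prior $\hat P$.

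To conclude, assume the solution of the robust problem is non-random, so that $\hat X=x$. Then $x$ solves the single-prior problem for $\hat P$, and since $\mathcal{M}_{\hat P}\neq\emptyset$ by Assumption \ref{P_hyp_2}, Proposition \ref{PROP-BASIC}(i) applied to the singleton $\{\hat P\}$ gives $\hat P\in\mathcal{S}_{\hat P}\subseteq\mathcal{S}$, whence $\mathcal{P}\cap\mathcal{S}\neq\emptyset$. If in addition the infimum in \eqref{A-123-A} is attained for $\hat P$, then Proposition \ref{PROP-BASIC}(ii) upgrades this to $\hat P\in\mathcal{M}_{\hat P}\subseteq\mathcal{M}$, giving $\mathcal{P}\cap\mathcal{M}\neq\emptyset$. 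The main obstacle is the saddle-point step: verifying the semicontinuity needed for the minimax interchange when $U(X_T)$ is unbounded, and securing the robust maximizer $\hat X$, which is exactly where the weak compactness of $\mathcal{P}$ and the asymptotic elasticity condition enter. Once the saddle point is available, the reduction to Proposition \ref{PROP-BASIC} is immediate.
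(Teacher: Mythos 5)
Your proof is correct and follows essentially the same route as the paper's: the easy direction via Jensen's inequality applied under a prior $Q\in\mathcal{P}\cap\mathcal{S}$, and the converse via a minimax interchange, attainment of the infimum over $\mathcal{P}$ at some $\hat P$, and reduction to Proposition \ref{PROP-BASIC} for the single prior $\hat P$. The only substantive difference is one of citation: where you invoke Sion's theorem plus weak lower semicontinuity (which you rightly flag as the delicate point when $U(X_T)$ is unbounded), the paper appeals to the saddle-point results of Denis--Kervarec and Neufeld--Nutz together with the lopsided minimax theorem of Aubin and Ekeland, which are tailored to exactly this setting.
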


\begin{proof}  Throughout the proof we will use the notation $X^{0}=\{X^{0}_{t}\}_{t \in [0,T]}$ for the wealth process of the non random portfolio that has constant value  $X^{0}_{t}=x$ for every $t \in [0,T]$. 

Assume first that ${\mathcal P} \cap {\mathcal S} \ne \emptyset$ and that $Q \in {\mathcal P} \cap {\mathcal S}$. Applying Jensen's inequality to the concave function $U$ and using the fact that $Q \in {\mathcal S}$  we obtain that for any $X \in {\mathcal X}(x)$, we must have that
\begin{eqnarray*}
U(x) \ge U({\mathbb E}_{Q}[X_T]) \ge {\mathbb E}_{Q}[U(X_T)] \ge \inf_{P \in {\mathcal P}} {\mathbb E}_{P}[U(X_T)] \ ,
\end{eqnarray*}
where the last inequality follows from the fact that $Q \in {\mathcal P}$. Taking the supremum over all $X \in {\mathcal X}(x)$ in the above we conclude that
\begin{eqnarray*}
U(x) \ge \sup_{X \in {\mathcal X}(x)} \inf_{P \in {\mathcal P}} {\mathbb E}_{P}[U(X)] \ .
\end{eqnarray*}
For the non-random portfolio $X^{0}(x)$  the equality is attained, hence $X^{0}$  is a maximizer. Since $\mathcal M\subset \mathcal S$ we would have clearly obtained the same result  if we had started from the assumption ${\mathcal P} \cap {\mathcal M} \ne \emptyset$. 

For the converse, assume that  $X^{0}$ is a maximizer.  Then,
\begin{eqnarray}\label{2-3-2016}
u(x)=\sup_{X \in {\mathcal X}(x)} \inf_{P \in {\mathcal P}} {\mathbb E}_{P}[U(X_T)]=\inf_{P \in {\mathcal P}}{\mathbb E}_{P}[U(X_{T}^{0})]= U(x) \ ,
\end{eqnarray}
since 
$$
{\mathbb E}_{P}[U(X_{T}^{0})]=U(x)
$$
for every $P \in {\mathcal P}$.

 By the saddle point property (\cite[Thm. 1]{denis2013optimal} or \cite[Thm. 2.4]{neufeld2015robust}), it holds that
\begin{eqnarray*}
\sup_{X \in {\mathcal X}(x)} \inf_{P \in {\mathcal P}} {\mathbb E}_{P}[U(X_T)] =\inf_{P \in {\mathcal P}} \sup_{X \in {\mathcal X}(x)} {\mathbb E}_{P}[U(X_T)] \ ,
\end{eqnarray*}
and hence, by \eqref{2-3-2016}, we have that
\begin{eqnarray}\label{2-3-2016-A}
U(x)=\inf_{P \in {\mathcal P}} \sup_{X \in {\mathcal X}(x)} {\mathbb E}_{P}[U(X_T)] \ .
\end{eqnarray}
Since ${\mathcal P}$ is weakly compact, by the lopsided minimax theorem of Aubin and Ekeland \cite[Ch. 6, Sec. 2, Thm. 7]{aubin2006applied} (see also \cite[Lem. 9]{denis2013optimal}), there exists $\hat{P} \in {\mathcal P}$ for which
\begin{eqnarray}\label{2-3-2016-B}
\inf_{P \in {\mathcal P}} \sup_{X \in {\mathcal X}(x)} {\mathbb E}_{P}[U(X_T)] =\sup_{X \in {\mathcal X}(x)} {\mathbb E}_{\hat{P}}[U(X_T)] \ .
\end{eqnarray}
Combining \eqref{2-3-2016-A} with \eqref{2-3-2016-B} we conclude that
\begin{eqnarray*}
U(x)=\sup_{X \in {\mathcal X}(x)} {\mathbb E}_{\hat{P}}[U(X_T)] \ .
\end{eqnarray*}
We have thus reduced the robust problem to the single prior  problem of the previous section with $P=\hat{P}$. 
Hence, by applying  Proposition \ref{PROP-BASIC} we obtain the desired result.
\end{proof}

In the result above, we have considered the market and the prices as given, and the problem of optimal portfolio selection of an uncertainty averse investor with a set of priors concerning the market was studied. It turns out that Theorem  \ref{MAIN-THEOREM-CONTINUOUS} allows one to relate to  the well known results of \cite{dow1992uncertainty}, concerning the effects of uncertainty on the net demand of risky assets, and thus contribute to a better understanding of the phenomenon of the existence of market freezes, which refers to situations where the market endogenously stops as the following example shows.

\begin{example} Consider an one period market starting at $t=0$ and ending at $t=T$ and an agent contemplating positioning on a set of risky assets with payoffs $A=(A_1,\ldots, A_N)$ at time $T$. The agent  reports  a minimax utility with a set of priors ${\mathcal P}$  concerning  the random variable $A$.  Then, the no betting set ${\mathcal N}$, consisting of those asset prices for which the net demand of the assets is zero, is  the convex set ${\mathcal N}=\{ {\mathbb E}_{P}[A] \,\, : \,\, P \in {\mathcal P}\}$.

Indeed, let  $\pi \in {\mathcal N}$. This means that the agent will not take up a  position in this market, hence by Theorem  \ref{MAIN-THEOREM-CONTINUOUS}, restricted in the one period case, this   is equivalent to the  existence of some  $P \in {\mathcal P}$ such that $\pi = {\mathbb E}_{P}[A]$. 

In the special case where the number of risky assets is $N=1$, the no betting set ${\mathcal N}$ is an interval. For instance, Dow and Werlang provide a simple illustrative example in section 2 of \cite{dow1992uncertainty},  where two possible outcomes H and L have respective non additive probabilities $\pi$ and $\pi '$. This is equivalent to considering the set of additive probabilities $(q,\, 1-q)$ corresponding to $(H,\, L)$ where $q$ ranges from $\pi$ to $1-\pi '$. Then they prove that the interval of no betting prices ranges between $\pi H+(1-\pi )L$ and $(1-\pi ' )H+\pi ' L$ which obviously coincides with the interval of  prices that our generalized methodology suggests.
\end{example}

\section*{Acknowledgments}
The authors wish to thank the two anonymous referees for their constructive remarks that motivated us to significantly improve this paper. We also acknowledge useful discussions with Professors A. Tsekrekos and F. Santambrogio. \\
D. Pinheiro research was supported by the PSC-Cuny research award TRADA-46-251, jointly funded by the Professional Staff Congress and the City University of New York.

\bibliography{biblio}
\bibliographystyle{apalike} 

\end{document}